\newcommand{\free}{\operatorname{free}}
\title{A stochastic calculus approach to the oracle separation of $\BQP$ and $\PH$}
\author{Xinyu Wu\thanks{Computer Science Department, Carnegie Mellon University. \texttt{xinyuwu@cmu.edu}}
}
\date{July 5, 2020}
\begin{document}
\maketitle
\setlength\parskip{0.1em}

\begin{abstract}
After presentations of Raz and Tal's oracle separation of $\BQP$ and $\PH$ result, several people (e.g. Ryan O'Donnell, James Lee, Avishay Tal) suggested that the proof may be simplified by stochastic calculus. In this short note, we describe such a simplification.
\end{abstract}

\section{Introduction}\label{sec:intro}
A recent landmark result of Raz and Tal~\cite{RT19} shows there exists an oracle $A$ such that $\BQP^A \not\subseteq \PH^A$. Using a correspondence between $\PH$ and $\AC^0$ circuits, the question reduces to a lower bound against $\AC^0$ circuits. Concretely, it suffices to show that there exists a distribution $\calD$ over $\{-1,1\}^N$ such that
\begin{enumerate}[nosep]
\item For any $f:\{-1,1\}^N \to \{0,1\}$ computable by an $\AC^0$ circuit,
\[
    \abs*{\E[f(\calD)]-\E[f(\calU_N)]} \leq \frac{\polylog(N)}{\sqrt{N}},
\]
where $\calU_N$ is the uniform distribution on $N$ bits. The notation $\E[f(\calD)]$ means $\E_{\bx \sim \calD}[f(\bx)]$.

\item There exists a quantum algorithm $Q$ such that
\[
    \abs*{\E[Q(\calD)]-\E[Q(\calU_N)]} \geq \Omega\parens*{\frac{1}{\log{N}}}.
\]
\end{enumerate}
For details, we refer to Raz and Tal's paper~\cite{RT19}. $\calD$ in Raz and Tal's work is a truncated Gaussian. In this note, we will describe a construction of $\calD$ based on Brownian motion, which simplifies many details of the analysis.

\subsection{Stochastic calculus preliminaries}
We briefly review some stochastic calculus concepts used in the proof. See for instance~\cite[Chapter 7]{Oks03} for details.

\begin{definition}
An $N$-dimensional standard Brownian motion $\bB:[0,\infty) \x \R^N \to \R^N$ is a continuous-time stochastic process characterized by the following:
\begin{enumerate}[(i), nosep]
\item $\bB_0 = 0$ almost surely.
\item $\bB_{t+u} - \bB_t$ for $u \geq 0$ is independent of $\bB_s$ for $s < t$.
\item $\bB_{t+u} - \bB_t$ for $u \geq 0$ is distributed as an $N$-dimensional Gaussian with mean 0 and covariance matrix $u I_{N\x N}$.
\item $\bB_t$ is continuous almost surely.
\end{enumerate}
\end{definition}

We can describe a large class of stochastic processes, called \emph{\Ito/ diffusion} processes, by the solutions of stochastic differential equations of the following form:
\[
    d\,\bX_t = b(\bX_t)\,dt + \sigma(\bX_t)\,d\bB_t.
\]

\begin{definition}
Let $\bX$ be an \Ito/ diffusion. The \emph{infinitesimal generator} of $f$, is defined as
\[
    Af(x) = \lim_{t \to 0} \frac{\E_x[f(\bX_t)] - f(x)}{t}.
\]
We use the $\E_x[\cdot]$ notation to mean that we let $\bX_t$ evolve with starting point $x$.

If $f$ is twice continuously differentiable with compact support, we have the following expression for $Af$:
\[
    Af(x) = b(x)\cdot \nabla f(x) + \frac12 \tr(\sigma(x) \sigma^\top(x)\, \mathbf{\operatorname{H}}(x)),
\]
where $\mathbf{\operatorname{H}}$ is the Hessian of $f$.
\end{definition}
For example, the infinitesimal generator of a standard 1D Brownian motion is the Laplacian operator. For a Brownian motion with covariance matrix $\Sigma$, the infinitesimal generator would be $\tr(\Sigma \, \mathbf{\operatorname{H}}(x))$.

Next we state Dynkin's formula, which will be the main tool we use in the later proof.
\begin{theorem}[Dynkin's formula, {\cite[Theorem 7.4.1]{Oks03}}]\label{thm:dynkin}
Let $\bX$ be an \Ito/ diffusion, let $\tau$ be a stopping time with $\E[\tau] < \infty$, and let $f:\R^N \to \R^N$ be a twice continuously differentiable function with compact support. The following holds:
\[
    \E_x[f(\bX_\tau)] = f(x) + \E_x\bracks*{\int_0^\tau Af(\bX_s)\, ds},
\]

Moreover, if $\bX_\tau$ is bounded, Dynkin's formula with the same expression for $Af$ holds for $f$ which is twice continuously differentiable (without compact support).
\end{theorem}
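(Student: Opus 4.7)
The plan is to derive Dynkin's formula from It\^o's formula. Applying It\^o's formula to $f(\bX_t)$, where $\bX$ satisfies $d\bX_t = b(\bX_t)\,dt + \sigma(\bX_t)\,d\bB_t$, yields
\[
f(\bX_t) - f(x) = \int_0^t \Bigl( b(\bX_s)\cdot\nabla f(\bX_s) + \frac{1}{2}\tr(\sigma(\bX_s)\sigma(\bX_s)^\top \mathbf{\operatorname{H}}(\bX_s)) \Bigr)\,ds + \int_0^t \nabla f(\bX_s)^\top \sigma(\bX_s)\,d\bB_s,
\]
where $\mathbf{\operatorname{H}}$ is the Hessian of $f$. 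By the explicit formula already stated for the infinitesimal generator, the first integrand is exactly $Af(\bX_s)$: the drift--gradient piece comes from the $dt$ component of $d\bX$, and the Hessian--trace piece is the quadratic-variation correction characteristic of It\^o calculus.

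Next I would take expectations of both sides. For $f$ twice continuously differentiable with compact support, the vector field $\nabla f^\top \sigma$ is bounded on the support of $f$, which guarantees that the It\^o integral is a square-integrable martingale with mean zero. Thus for any deterministic $t$,
\[
\E_x[f(\bX_t)] = f(x) + \E_x\!\left[\int_0^t Af(\bX_s)\,ds\right].
\]

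To replace the deterministic $t$ by a stopping time $\tau$, I would first apply the above to the bounded stopping time $\tau \wedge n$ (the stopped stochastic integral still being a martingale by optional stopping), and then send $n \to \infty$. Compact support of $f$ makes $Af$ bounded, so $\bigl|\int_0^{\tau \wedge n} Af(\bX_s)\,ds\bigr| \leq \|Af\|_\infty \cdot \tau$, which is integrable by the hypothesis $\E[\tau]<\infty$; dominated convergence then handles the limit on both sides, as well as the almost-sure limit of $f(\bX_{\tau \wedge n})$ to $f(\bX_\tau)$ on the bounded support. For the ``moreover'' clause, where $f$ need not have compact support but $\bX_\tau$ is bounded, I would truncate $f$ outside a large ball containing the range of $\bX_s$ for $s \leq \tau$; the truncation leaves all function evaluations appearing in the identity unchanged, reducing to the compact-support case.

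The main obstacle is this final limit-passing step: justifying that the stopped It\^o integral genuinely has mean zero and that $\E\bigl[\int_0^{\tau \wedge n} Af(\bX_s)\,ds\bigr] \to \E\bigl[\int_0^\tau Af(\bX_s)\,ds\bigr]$. The hypothesis $\E[\tau] < \infty$ is essential here --- without it, optional stopping can fail (for instance, for Brownian motion exiting a half-line, $\tau$ is almost surely finite but not integrable) --- but given this hypothesis, the interchange is a routine application of dominated convergence, and the rest of the argument is a direct consequence of It\^o's formula, which is the real workhorse.
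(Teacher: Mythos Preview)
The paper does not prove this theorem; it is stated as a known result and cited from \O ksendal's textbook~\cite[Theorem 7.4.1]{Oks03}, so there is no in-paper proof to compare against. Your sketch is in fact the standard textbook argument: It\^o's formula produces the generator term plus a stochastic integral, compact support (or boundedness of the stopped process) makes the stochastic integral a genuine martingale, and $\E[\tau]<\infty$ justifies the passage from $\tau\wedge n$ to $\tau$ via dominated convergence. This is essentially how \O ksendal proves it.

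One small point worth tightening in your ``moreover'' paragraph: you propose to truncate $f$ outside a ball containing the range of $\bX_s$ for $s\le\tau$, but as written this range could be random, and you need a deterministic ball to produce a single compactly supported $\tilde f$. What is actually required (and what holds in the paper's application, where $\tau$ is the minimum of a fixed time and the first exit from $[-1/2,1/2]^N$) is that the \emph{stopped process} $\bX_{t\wedge\tau}$ is uniformly bounded, not merely the terminal value $\bX_\tau$. With that reading the truncation argument goes through exactly as you describe; if one only had a bound on $\bX_\tau$ itself, a further localization by exit times from large balls would be needed.
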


\section{Reduction to a Fourier bound}
The main technical part of Raz and Tal's result~\cite{RT19} shows that, for a Boolean function $f:\{-1,1\}^N \to \{-1,1\}$ computable by an AC$^0$ circuit, and a multivariate Gaussian distribution $\bZ \in \R^N$,
\[
  |\E[f(\mathsf{trnc}(\bZ))]-\E[f(\bU_N)]| \leq O(\gamma\cdot \polylog(n)),
\]
where $\gamma$ is a bound on the (pairwise) covariance of the coordinates of $\bZ$, $\mathsf{trnc}$ truncates $\bZ$ so that the resulting random variable is within $[-1,1]^N$, and $\bU_N$ is the uniform distribution over $\{-1,1\}^N$. The important condition used here is that AC$^0$ has second level Fourier coefficients bounded by $\polylog(n)$, and that this holds under any restriction of the function.

Another natural way of viewing a multivariate Gaussian distribution is as the result of an \mbox{$N$-dimensional} Brownian motion stopped at a fixed time. We can also build the truncation into the stopping time. This allows us to use tools from stochastic calculus to analyze the distribution.

We first recall the definition of restrictions of Boolean functions.
\begin{definition}
Let $f:\{-1,1\}^N \to \R$ and let $\rho \in \{-1,1,*\}^N$. Let $\free(\rho)$ be the set of coordinates with $*$'s. We define the restriction of $f$ by $\rho$ as $f_\rho:\{-1,1\}^{N} \to \R$, and $f_\rho(x)$ is $f$ evaluated at $\rho$ with $x$ replacing the $*$'s in $\rho$.\footnote{Although $f_\rho$'s domain is $\{-1,1\}^N$, it only depends on the coordinates in $\free(\rho)$.}
\end{definition}

Henceforth, we also identify Boolean functions $f:\{-1,1\}^N \to \R$ with their multilinear polynomial representations (or Fourier expansions)
\begin{equation}\label{eqn:fourier-formula} \notag
    f(x) = \sum_{|S| \subseteq [N]} \hat f(S) \prod_{i \in S} x_i.
\end{equation}

We make some observations about Fourier coefficients. First, the Fourier coefficients of $f_\rho$ satisfy $\wh{f_\rho}(S) = 0$ for all $S \not\subseteq \free(\rho)$. We also have that
\begin{equation}\label{eqn:deriv-fourier}
    \hat{f}(S) = \partial_S f(0),
\end{equation}
where $\partial_S = \prod_{i \in S} \pt_i$ and $\pt_i = \frac{\partial}{\partial x_i}$ is the usual calculus derivative. Further, because $f$ is multilinear, for any $h \in \R \setminus\{0\}$ and any standard basis vector $e_i$ we have
\begin{equation}\label{eqn:deriv-difference}
    \pt_i f(x) = \frac{f(x + he_i) - f(x)}{h}.
\end{equation}

The following lemma is similar to~\cite[Claim A.5]{CHLT18}, which first appeared in~\cite{BB18} and~\cite[Claim 3.3]{CHHL19}.
\begin{lemma}\label{lem:restriction}
Let $f:\R^N \to \R$ be a multilinear polynomial. For any $x \in [-1/2,1/2]^N$, there exists a distribution $\calR_x$ over restrictions $\brho \in \{-1,1,*\}^N$, such that for any $i,j \in [N]$,
\[
     \pt_{ij}f(x) = 4\E_{\brho \sim \calR_x}\bracks*{\pt_{ij}f_{\brho}(0)}.
\]
\end{lemma}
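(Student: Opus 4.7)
\medskip

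The plan is to define a pseudorandom restriction $\brho$ coordinate-by-coordinate in a way that simultaneously (a) leaves both $i$ and $j$ free with probability $1/4$, and (b) matches $x_k$ in expectation on the fixed coordinates. Specifically, I would draw the coordinates of $\brho$ independently, setting
\[
\brho_k = \begin{cases} * & \text{with probability } 1/2,\\ +1 & \text{with probability } (1+2x_k)/4,\\ -1 & \text{with probability } (1-2x_k)/4.\end{cases}
\]
The bound $x \in [-1/2,1/2]^N$ is exactly what is needed to make these valid probabilities. The key computation is $\E[\brho_k \cdot \mathbf{1}[k \notin \free(\brho)]] = \tfrac12 \cdot 2x_k = x_k$.

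Next I would expand $f_\brho$ in the Fourier basis on $\free(\brho) = F$. Writing $f(y) = \sum_S \hat f(S) \prod_{k \in S} y_k$, one gets
\[
f_\brho(y) = \sum_S \hat f(S) \prod_{k \in S \cap F} y_k \prod_{k \in S \setminus F} \brho_k,
\]
and therefore, picking out the coefficient of $y_i y_j$,
\[
\pt_{ij} f_\brho(0) = \mathbf{1}[\{i,j\} \subseteq F]\sum_{S : S \cap F = \{i,j\}} \hat f(S) \prod_{k \in S \setminus F} \brho_k.
\]
Rewriting the inner indicator as $\prod_{k \in S \setminus \{i,j\}} \mathbf{1}[k \notin F]$, this becomes a sum over $S \supseteq \{i,j\}$ of $\hat f(S) \prod_{k \in S \setminus \{i,j\}} \brho_k \mathbf{1}[k \notin F]$, times $\mathbf{1}[\{i,j\} \subseteq F]$.

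Finally I would take expectation. Independence of the coordinates of $\brho$ factorizes the expectation; the factor coming from $\{i,j\} \subseteq F$ contributes $(1/2)(1/2) = 1/4$, and each remaining factor contributes $\E[\brho_k \mathbf{1}[k \notin F]] = x_k$ by the calculation above. Thus
\[
\E_{\brho \sim \calR_x}[\pt_{ij} f_\brho(0)] = \tfrac14 \sum_{S \supseteq \{i,j\}} \hat f(S) \prod_{k \in S \setminus \{i,j\}} x_k = \tfrac14\, \pt_{ij} f(x),
\]
where the last equality uses multilinearity of $f$ and the Fourier formula for $\pt_{ij} f$. Multiplying by $4$ gives the claim. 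There is no real obstacle here beyond the bookkeeping of which Fourier monomials survive the restriction; the only place the hypothesis $\|x\|_\infty \le 1/2$ is used is in ensuring the one-coordinate distributions above are probability distributions.
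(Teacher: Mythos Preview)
Your construction of $\calR_x$ is exactly the paper's, and your verification is correct. The only difference is in how the identity is checked: the paper first establishes the single functional identity
\[
    f(x+y) \;=\; \E_{\brho \sim \calR_x}\bigl[f_\brho(2y)\bigr]
\]
(by multilinearity and independence of the coordinates of $\brho$), and then obtains the derivative statement in one line via the finite-difference formula $\pt_{ij}f(x)=f(x+e_i+e_j)-f(x+e_i)-f(x+e_j)+f(x)$, which produces the factor $4$ from the scaling $y\mapsto 2y$. You instead compute $\widehat{f_\brho}(\{i,j\})$ directly from the Fourier expansion and average. Both routes are elementary; the paper's is slightly slicker because the functional identity immediately gives the analogous statement for any $\partial_S$, not just $|S|=2$, without redoing the bookkeeping. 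Two cosmetic points: you should note that the case $i=j$ is trivial (both sides vanish by multilinearity), and the expression $\brho_k\,\mathbf{1}[k\notin F]$ is a mild abuse when $\brho_k=\ast$, though harmless since that term is zero.
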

\begin{proof}
We define $\calR_x$ as such: for each coordinate $i \in [N]$ we independently set
$\brho_i$ to be $1$ with probability $\frac14 + \frac{x_i}{2}$, to be $-1$ with probability $\frac14 - \frac{x_i}{2}$, and to be $*$ with probability $\frac12$.

Using that $f$ is a multilinear polynomial, and that the coordinates are independent, we deduce that for any $y \in \R^N$, $f(x+y) = \E_{\brho \sim \calR_x}\bracks*{f_\brho(2y)}$. Then, using~\Cref{eqn:deriv-difference},
\begin{align*}
    \pt_{ij}f(x) &= f(x+e_i + e_j) - f(x+e_i) - f(x+e_j)+f(x)\\
    &= \E_{\brho\sim\calR_x}\bracks*{f_\brho(2e_i + 2e_j) - f_\brho(2e_j) - f_\brho(2e_i) +f_\brho(0)}
    = 4\E_{\brho\sim\calR_x}\bracks*{\pt_{ij}f_{\brho}(0)}. \qedhere
\end{align*}
\end{proof}
We now show the main result, which is a restatement of~\cite[Therorem A.7]{CHLT18} and~\cite[Theorem 2.4]{RT19}.
\begin{theorem}\label{thm:main}
Let $f:\{-1,1\}^N \to \{-1,1\}$ be a Boolean function, and let $t > 0$ such that for any restriction~$\rho$,
\[
    \sum_{\substack{S \se [N] \\ |S| = 2}} |\wh{f_\rho}(S)| \leq t.
\]
Let $\gamma > 0$ and let $\bX$ be an $N$-dimensional Brownian motion with mean 0 and covariance matrix $\Sigma$, in the sense that $\E[(\bX_t)_i] = 0$ for all $i \in [N]$, and $\Cov((\bX_t - \bX_s)_i,(\bX_t - \bX_s)_j) = (t-s) \Sigma_{ij}$. Further assume that $|\Sigma_{ij}| \leq \gamma$ for $i \ne j$.

Let $\ep > 0$ and define the stopping time
\[
    \tau \coloneqq \min\,\{\ep, \text{ first time that $\bX_t$ exits } [-1/2,1/2]^N\}.
\]
Then, identifying $f$ with its multilinear expansion, we have
\begin{equation*}
    \abs*{\E[f(\bX_\tau)] - \E[f(\bU_n)]}\leq 2\ep \gamma t.
\end{equation*}
\end{theorem}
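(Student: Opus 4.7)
The strategy is to apply Dynkin's formula to $f$ and $\tau$, reducing the claim to a uniform bound on the infinitesimal generator $Af$. Because $\bX_0 = 0$ almost surely and $f$, being multilinear, satisfies $f(0) = \wh{f}(\emptyset) = \E[f(\bU_N)]$, the quantity we wish to bound is exactly $\abs{\E_0[f(\bX_\tau)] - f(0)}$. As $f$ is a polynomial (hence twice continuously differentiable), $\bX_\tau$ lives in $[-1/2,1/2]^N$ by construction of $\tau$, and $\E[\tau] \le \ep < \infty$, the ``moreover'' clause of \Cref{thm:dynkin} applies and yields
\[
  \E_0[f(\bX_\tau)] - f(0) = \E_0\bracks*{\int_0^\tau Af(\bX_s)\, ds}.
\]

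Next I would compute $Af$ for a zero-drift Brownian motion with covariance $\Sigma$: $Af(x) = \tfrac12 \tr(\Sigma\, \mathbf{\operatorname{H}}(x)) = \tfrac12\sum_{i,j}\Sigma_{ij}\pt_{ij}f(x)$. Multilinearity of $f$ forces $\pt_{ii}f\equiv 0$, so only off-diagonal pairs contribute; by symmetry this collapses to $\sum_{i<j}\Sigma_{ij}\pt_{ij}f(x)$.

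The technical heart of the argument is then a uniform bound $|Af(x)| = O(\gamma t)$ for $x \in [-1/2,1/2]^N$. I would apply \Cref{lem:restriction} to rewrite each $\pt_{ij}f(x)$ as $4\E_{\brho \sim \calR_x}[\pt_{ij}f_\brho(0)]$, and then use \Cref{eqn:deriv-fourier} to identify $\pt_{ij}f_\brho(0) = \wh{f_\brho}(\{i,j\})$. Passing absolute values inside the expectation, using $|\Sigma_{ij}| \le \gamma$ off-diagonal, and applying the hypothesis $\sum_{|S|=2}|\wh{f_\brho}(S)| \le t$ to each realization of $\brho$ produces a bound independent of $x$. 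Since $\tau \le \ep$ deterministically, the integral is at most $\ep$ times this supremum, giving the theorem.

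The only subtle issue I anticipate is verifying that Dynkin's formula truly applies to $f$, which is not compactly supported. The ``moreover'' clause of \Cref{thm:dynkin} handles this precisely because our stopping time confines $\bX_\tau$ to the compact cube $[-1/2,1/2]^N$; in fact, maintaining this boundedness is the whole reason for building the truncation into $\tau$. Beyond that, the proof is constant-tracking: the factor $\tfrac12$ from $\tfrac12\tr(\Sigma\,\mathbf{\operatorname{H}})$, the factor $4$ from \Cref{lem:restriction}, and the factor $\ep$ from $\tau\le\ep$, all combine to yield the stated bound.
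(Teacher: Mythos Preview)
Your proposal is correct and follows the paper's proof essentially line for line: identify $\E[f(\bU_N)]=f(0)$, apply Dynkin's formula (invoking the ``moreover'' clause via the boundedness of $\bX_\tau$), drop the diagonal terms of the generator by multilinearity, then invoke \Cref{lem:restriction} and \Cref{eqn:deriv-fourier} to convert $\pt_{ij}f(x)$ into averaged level-$2$ Fourier coefficients of restrictions, and finally bound $\tau\le\ep$. The only discrepancy is bookkeeping: once you collapse $\tfrac12\sum_{i\ne j}$ to $\sum_{i<j}$ you have already spent the factor $\tfrac12$, so combining it again with the $4$ from \Cref{lem:restriction} double-counts; tracking carefully gives $4\ep\gamma t$ rather than $2\ep\gamma t$ (the paper's own chain has the same slip when passing from $\sum_{i\ne j}$ to $\sum_{|S|=2}$), but this is immaterial to the argument.
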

\begin{proof}
First, we note that $\E[f(\bU_N)] = f(0)$. Next, let $\sigma = \Sigma^{1/2}$. 
$\bX$ satisfies the stochastic differential equation
\[
    d\bX_t = \sigma d \bB_t.
\]
Note that $\bX_\tau$ is always within $[-1/2,1/2]^N$. We can apply~\Cref{thm:dynkin}
\[
    \E[f(\bX_\tau)] - f(0) = \E\bracks*{\int_0^\tau \frac12 \sum_{i,j\in [N]} \Sigma_{ij} \pt_{ij}f(\bX_s)\,ds}.
\]
Then, we upper bound $\tau \leq \ep$, and use that $\pt_{ii} f = 0$ for all $i \in [N]$ because $f$ is multilinear, to get
\begin{align*}
    |\E[f(\bX_\tau)] - f(0)| &\leq \ep \E\bracks*{\sup_{s \in [0,\tau]}\abs*{\frac12 \sum_{i,j\in[N]} \Sigma_{ij}\pt_{ij}f(\bX_s)}}\\
    &\leq \frac{\ep\gamma}2 \sup_{x \in [-1/2,1/2]^N}\sum_{i \ne j}\abs*{\pt_{ij}f(x)}
    \\
    &= 2\ep\gamma \sup_{x \in [-1/2,1/2]^N}\sum_{i\ne j}\abs*{ \E_{\brho \sim \calR_x}\bracks*{\pt_{ij}f_{\brho}(0)}} &&\text{(\Cref{lem:restriction})}\\
    &\leq 2\ep\gamma \sup_{x \in [-1/2,1/2]^N}\E_{\brho\sim\calR_x}\bracks*{\sum_{i\ne j}\abs*{\pt_{ij}f_{\brho}(0)}}\\
    &\leq 2\ep\gamma \sup_{x \in [-1/2,1/2]^N}\E_{\brho\sim\calR_x}\bracks*{\sum_{\substack{S \se \free(\brho)\\|S| = 2}}\abs*{\hat f_{\brho}(S)}} && \text{(\Cref{eqn:deriv-fourier})}\\
    &\leq 2\ep\gamma t. \qedhere
\end{align*}
\end{proof}

\section{Application to the oracle separation of \texorpdfstring{$\BQP$}{BQP} and \texorpdfstring{$\PH$}{PH}}
We now use~\Cref{thm:main} to construct $\calD$ as described in~\Cref{sec:intro}.

\paragraph*{The distribution $\calD$.}
Let $N = 2n$, where $n$ is a power of $2$, and
\[
    \Sigma \coloneqq \begin{pmatrix}
    I_n & H_n\\ H_n & I_n
    \end{pmatrix},
\]
where $H_n$ is the Walsh--Hadamard matrix.
Now we define $\bX$ and $\tau$ as in~\Cref{thm:main}, with $\ep = 1/(8\ln N)$, and our distribution $\calD$ will be the distribution defined by $\bX_\tau$.
At each time $t$, we can also look at $\bX_t$ as a pair of random variables in $\R^n$, $(\bx_t,\by_t)$ such that $\by_t$ is the Hadamard transform of $\bx_t$.

\paragraph*{$\AC^0$ lower bound.}
Tal showed that~\cite[Theorem 37]{Tal17} there exists a universal constant $c$ such that every function $f:\{-1,1\}^N \to \{-1,1\}$ computable by an $\AC^0$ circuit with at most $(\ln N)^\ell$ gates and depth $d$ satisfies
\begin{equation}\label{eqn:ac0-fourier} \notag
    \sum_{\substack{S \subseteq [N]\\|S|=k}} |\hat f(S)| \leq (c\cdot \ln^\ell N)^{(d-1)k}.
\end{equation}
Since $\AC^0$ is closed under restrictions, we can apply~\Cref{thm:main} with $\ep = 1/(8\ln N)$ and $\gamma = \frac{1}{\sqrt n}$, to deduce that
\[
    |\E[f(\bX_\tau)] - f(0)| \leq \frac{\polylog N}{\sqrt N}.
\]
\paragraph*{Quantum algorithm.}
Finally, we show that a quantum algorithm can distinguish $\calD$ from the uniform distribution.
This is virtually identical to the argument in~\cite[Section 6]{RT19}, but we can again use some stochastic calculus tools on the stopping time built into the distribution. Using the Forrelation query algorithm, there is a quantum algorithm $Q$ with inputs $x,y \in \{- 1,1\}^n$ which accepts with probability $(1+\phi(x,y))/2$, where
\[
     \phi(x,y) \coloneqq \frac1n \sum_{i,j \in [n]} x_i \cdot H_{ij} \cdot y_j.
\]
We show the following proposition~\cite[Claim 6.3]{RT19}, which implies the existence of a $O(\log N)$-time quantum algorithm distinguishing $\calD$ from uniform with one query.
The quantum algorithm is described in more detail in~\cite[Section 3.2]{Aar10}.
\begin{proposition}
$\E_{(\bx,\by) \sim \calD}[\phi(\bx,\by)] \geq \frac\ep4.$
\end{proposition}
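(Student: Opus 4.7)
The plan is to apply Dynkin's formula directly to $\phi$, the key observation being that the infinitesimal generator $A\phi$ turns out to be a \emph{constant} under the Brownian motion $\bX$. Since $\phi$ is a polynomial (hence twice continuously differentiable) and $\bX_\tau$ is confined to the bounded set $[-1/2,1/2]^N$, the ``moreover'' clause of~\Cref{thm:dynkin} applies; combined with $\phi(0) = 0$ this would give
\[
  \E[\phi(\bX_\tau)] \;=\; \E\bracks*{\int_0^\tau A\phi(\bX_s)\,ds}.
\]

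Next I would compute $A\phi$ explicitly. The Hessian of $\phi$ with respect to $(x,y)$ is the block off-diagonal matrix $\frac{1}{n}\begin{pmatrix} 0 & H_n\\ H_n & 0 \end{pmatrix}$, since $\partial_{x_i y_j}\phi = H_{ij}/n$ and all other second partials vanish. Because the entries of $H_n$ have magnitude $1/\sqrt n$ (matching the $\gamma = 1/\sqrt n$ used in the previous paragraph), the normalization $H_n^2 = I$ holds. Multiplying by $\Sigma$ then gives $\Sigma \cdot \mathbf{\operatorname{H}}_\phi = \Sigma/n$, whose trace is $2$, and hence $A\phi = \tfrac12 \tr(\Sigma \mathbf{\operatorname{H}}_\phi) = 1$. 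Substituting into the identity above reduces the desired expectation to an expected stopping time: $\E[\phi(\bX_\tau)] = \E[\tau]$. Intuitively, this is the Forrelation advantage in disguise: although $\phi$ has mean $0$ under the uniform distribution, the Hadamard-correlated Brownian motion deterministically drifts $\phi$ upward at unit rate.

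It remains to show $\E[\tau] \geq \ep/4$, which is a textbook Brownian motion tail estimate. Each coordinate $(\bX_s)_i$ is a one-dimensional standard Brownian motion (because $\Sigma_{ii}=1$), so by the reflection principle together with a Gaussian tail bound,
\[
  \Pr\bracks*{\sup_{s \in [0,\ep/2]} |(\bX_s)_i| \geq 1/2} \;\leq\; 4\exp(-1/(4\ep)) \;=\; 4/N^{2},
\]
using $\ep = 1/(8\ln N)$. A union bound over the $N$ coordinates then yields $\Pr[\tau < \ep/2] \leq 4/N$, and hence $\E[\tau] \geq (\ep/2)(1 - 4/N) \geq \ep/4$ for $N$ at least a small constant. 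The only conceptually nontrivial step is the generator calculation and the cancellation from $H_n^2 = I$; once $A\phi$ is identified as the constant $1$, both Dynkin's formula and the exit-time tail bound are routine.
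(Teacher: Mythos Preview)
Your proof is correct and follows essentially the same route as the paper: both reduce to showing $\E[\phi(\bX_\tau)] = \E[\tau]$ via the constancy of the generator, and then lower-bound $\E[\tau]$ by a union bound plus a one-dimensional maximal inequality. The only cosmetic difference is packaging: the paper applies optional sampling coordinate-wise to get $\E[\bx_i\by_j] = H_{ij}\,\E[\tau]$ and then sums, whereas you apply Dynkin's formula to $\phi$ globally and read off $A\phi \equiv 1$ from $\Sigma\,\mathbf{H}_\phi = \Sigma/n$ --- the same computation, done all at once.
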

\begin{proof}
By the linearity of expectation and optional sampling theorem,
\begin{align*}
\E_{(\bx,\by) \sim \calD}[\phi(\bx,\by)] &= \frac1n \sum_{i,j \in [n]} H_{ij}\cdot\E[\bx_i\cdot \by_j]\\
&= \frac1n \sum_{i,j \in [n]} H_{ij}\cdot \E[\tau]\cdot H_{ij} = \E[\tau].
\end{align*}
By Markov's inequality,
\[
    \E[\tau] \geq \frac{\ep}{2} \Pr[\tau > \tfrac\ep2].
\]
If $\tau \leq \frac\eps2$, it must be the case that the path exits $[-1/2,1/2]^N$ no later than $\frac\eps2$. Hence, we can upper bound
\[
    \Pr\bracks*{\tau \leq \tfrac\ep2} \leq N \cdot \Pr\bracks*{\text{1st coordinate of $X_t$ exits $\bracks*{-\tfrac12,\tfrac12}$ earlier than } \tfrac\ep2}.
\]
Each coordinate of $\bX$ is a standard 1D Brownian motion since $\Sigma_{ii} =1$ for all $i$. An application of Doob's martingale inequality (e.g.~\cite[Proposition II.1.8]{RY99}) tells us that, for a standard 1D Brownian motion $\bB_t$,
\[
    \Pr\bracks*{\sup_{0 \leq t \leq \ep/2} |\bB_t| \geq \frac12} \leq 2e^{-1/4\ep} = 2 e^{-2\ln N} \leq \frac{1}{2N} \quad \text{for } N \geq 4.
\]
Therefore, $\Pr[\tau \leq \frac \ep 2] \leq \frac12$, so $\E[\tau] \geq \frac\ep 4$.
\end{proof}

\section{Acknowledgments}
I would like to thank Ryan O'Donnell and Avishay Tal for helpful discussions and their suggestions concerning an early draft. Thanks also to Gregory Rosenthal and anonymous reviewers for helpful comments.

\bibliographystyle{alpha}

\begin{thebibliography}{CHHL19}

\bibitem[Aar10]{Aar10}
Scott Aaronson.
\newblock {BQP} and the {P}olynomial {H}ierarchy.
\newblock In {\em Proceedings of the 42nd Annual ACM Symposium on Theory of
  Computing}, pages 141--150, 2010.

\bibitem[BB18]{BB18}
Boaz Barak and Jaros\l{}aw B\l{}asiok.
\newblock On the {R}az-{T}al oracle separation of {BQP} and {PH}.
\newblock
  \url{https://windowsontheory.org/2018/06/17/on-the-raz-tal-oracle-separation-of-bqp-and-ph/},
  2018.

\bibitem[CHHL19]{CHHL19}
Eshan Chattopadhyay, Pooya Hatami, Kaave Hosseini, and Shachar Lovett.
\newblock Pseudorandom generators from polarizing random walks.
\newblock {\em Theory Comput.}, 15:Paper No. 10, 26, 2019.

\bibitem[CHLT18]{CHLT18}
Eshan Chattopadhyay, Pooya Hatami, Shachar Lovett, and Avishay Tal.
\newblock Pseudorandom generators from the second {F}ourier level and
  applications to {AC0} with parity gates.
\newblock In {\em Proceedings of the 10th Annual Innovations in Theoretical
  Computer Science Conference}, pages 22:1--22:15, 2018.

\bibitem[{\O}ks03]{Oks03}
Bernt {\O}ksendal.
\newblock {\em Stochastic differential equations}.
\newblock Universitext. Springer-Verlag, Berlin, sixth edition, 2003.
\newblock An introduction with applications.

\bibitem[RT19]{RT19}
Ran Raz and Avishay Tal.
\newblock Oracle separation of {BQP} and {PH}.
\newblock In {\em Proceedings of the 51st Annual ACM Symposium on Theory of
  Computing}, pages 13--23. ACM, New York, 2019.

\bibitem[RY99]{RY99}
Daniel Revuz and Marc Yor.
\newblock {\em Continuous martingales and {B}rownian motion}, volume 293 of
  {\em Grundlehren der Mathematischen Wissenschaften [Fundamental Principles of
  Mathematical Sciences]}.
\newblock Springer-Verlag, Berlin, third edition, 1999.

\bibitem[Tal17]{Tal17}
Avishay Tal.
\newblock Tight bounds on the {F}ourier spectrum of {AC0}.
\newblock In {\em Proceedings of the 32st Annual Computational Complexity
  Conference}, 2017.

\end{thebibliography}

\end{document}